\documentclass{amsart}

\usepackage{amsmath, amssymb, amsfonts, mathtools}
\usepackage{amsthm}
\usepackage{amsrefs}
\usepackage{enumitem}
\usepackage{accents}
\usepackage{mathrsfs}
\usepackage[a4paper,margin=1.32in]{geometry}
\mathtoolsset{showonlyrefs=true}  
\usepackage{hyperref}
\hypersetup{
    colorlinks = true,
    linkcolor  = DutchOrange,    
    citecolor  = DutchOrange,    
    urlcolor   = DutchOrange,    
}

\newtheorem{theorem}{Theorem}
\newtheorem{corollary}{Corollary}
\newtheorem{remark}{Remark}

\newtheorem{assumption}{Assumption}
\newtheorem{definition}{Definition}

\usepackage{tikz,pgfplots,pgfplotstable}
\usepackage{cuted} 

\usepackage{tikz-cd}
\usepgfplotslibrary{groupplots}
\usetikzlibrary{arrows.meta}
\usetikzlibrary{patterns.meta} 
\usepackage{graphicx}  
\usepackage{subcaption}
\pgfplotsset{compat=newest}

\usepackage[ruled,vlined]{algorithm2e}

\usepackage[svgnames]{xcolor} 
\DefineNamedColor{named}{LUblue}{cmyk}{1,0.85,0.05,0.22}
\DefineNamedColor{named}{LUbronze}{cmyk}{0.09,0.57,1,0.41}
\DefineNamedColor{named}{LuPastelle}{cmyk}{0,0.15,0.05,0.0}
\DefineNamedColor{named}{LuPastelle2}{cmyk}{0.29,0.02,0.24,0.03}
\DefineNamedColor{named}{LuPastelle3}{cmyk}{0.24,0.03,0.07,0.02}
\DefineNamedColor{named}{gentledark}{gray}{.15}
\colorlet{LUdarkblue}{LUblue!60!black}
\DefineNamedColor{named}{DutchOrange}{rgb}{1,0.498,0} 
\DefineNamedColor{named}{DutchBlue}{rgb}{0,0.239,0.647} 

\usepackage{microtype}

\usepackage{cite}

\usepackage[acronym]{glossaries}
\newacronym{mimo}{MIMO}{multi-input multi-output}
\newacronym{lti}{LTI}{linear-time-invariant}
\newacronym[plural=LMIs,
             longplural=linear matrix inequalities]{lmi}{LMI}{linear matrix inequality}
\newacronym[plural=SRGs,
             firstplural=Scaled Relative Graphs]{srg}{SRG}{Scaled Relative Graph}

\def\BibTeX{{\rm B\kern-.05em{\sc i\kern-.025em b}\kern-.08em
    T\kern-.1667em\lower.7ex\hbox{E}\kern-.125emX}}

\usepackage{etoolbox}
\makeatletter
\patchcmd{\equation}{\@beginparpenalty}{\@lowpenalty}{}{}
\patchcmd{\endequation}{\@endparpenalty}{\@lowpenalty}{}{}
\makeatother

\makeatletter

\def\HyperAbsTol{0.5pt}
\def\HyperRelTol{1e-6}

\def\hyper@x#1,#2\relax{#1} 
\def\hyper@y#1,#2\relax{#2} 
\def\hyper@coords#1{#1}     

\pgfmathdeclarefunction{safeDiv}{2}{%
  \pgfmathparse{ifthenelse(abs(#2) < \hyper@tol, 0, #1/#2)}%
}

\def\hyper@computer#1#2{%
  \edef\hyper@toscan{(#1)}%
  \tikz@scan@one@point\hyper@coords\hyper@toscan
  \edef\hyper@sx{\the\pgf@x}%
  \edef\hyper@sy{\the\pgf@y}%
  \edef\hyper@toscan{(#2)}%
  \tikz@scan@one@point\hyper@coords\hyper@toscan
  \edef\hyper@ex{\the\pgf@x}%
  \edef\hyper@ey{\the\pgf@y}%

  \pgfmathsetmacro{\hyper@mx}{(\hyper@ex + \hyper@sx)/2}%
  \pgfmathsetmacro{\hyper@my}{(\hyper@ey + \hyper@sy)/2}%
  \pgfmathsetmacro{\hyper@dy}{\hyper@ey - \hyper@sy}%
  \pgfmathsetmacro{\hyper@dx}{\hyper@ex - \hyper@sx}%

  \pgfmathsetmacro{\hyper@L}{veclen(\hyper@dx,\hyper@dy)}%
  \pgfmathsetmacro{\hyper@tol}{max(\HyperAbsTol, \HyperRelTol*\hyper@L)}%

  \pgfmathtruncatemacro{\isvertical}{abs(\hyper@dx) < \hyper@tol}%
  \pgfmathtruncatemacro{\ishorizontal}{abs(\hyper@dy) < \hyper@tol && (abs(\hyper@dx) < 0.5)}%

  \ifnum\numexpr\isvertical+\ishorizontal>0\relax
    \edef\hyper@cmd{-- (\tikztotarget)}%
  \else
    \pgfmathsetmacro{\hyper@t}{safeDiv(\hyper@my, \hyper@dx)}%
    \pgfmathsetmacro{\hyper@cx}{\hyper@mx + \hyper@t * \hyper@dy}%
    \pgfmathsetmacro{\hyper@radius}{veclen(\hyper@cx - \hyper@sx, \hyper@sy)}%
    \pgfmathsetmacro{\hyper@sangle}{180 - atan2(\hyper@sy,\hyper@cx-\hyper@sx)}%
    \pgfmathsetmacro{\hyper@eangle}{180 - atan2(\hyper@ey,\hyper@cx-\hyper@ex)}%
    \edef\hyper@cmd{arc[radius=\hyper@radius pt, start angle=\hyper@sangle, end angle=\hyper@eangle]}%
  \fi
}

\tikzset{%
  hyperbolic disc radius/.initial={1cm},
  hyperbolic plane/.style={
    to path={
      \pgfextra{\hyper@computer\tikztostart\tikztotarget}%
      \hyper@cmd
    }
  },
  hyperbolic disc/.style={
    to path={
      \pgfextra{\hyper@disc@computer\tikztostart\tikztotarget}%
      \hyper@cmd
    }
  },
  hyperbolic plane target angle/.initial={},
}

\pgfmathdeclarefunction{bkx}{2}{%
\begingroup
 \pgfmathparse{1-(2/(1+#1*#1+#2*#2))}
 \pgfmath@smuggleone\pgfmathresult%
\endgroup}

\pgfmathdeclarefunction{bky}{2}{%
\begingroup
 \pgfmathparse{-(2*#1/(1+#1*#1+#2*#2))}
 \pgfmath@smuggleone\pgfmathresult%
\endgroup}

\makeatother

\newcommand{\x}{x}  
\newcommand{\y}{y}  
\renewcommand{\u}{u}  
\renewcommand{\k}{k} 
\newcommand{\T}{\ensuremath{\mathbf{L}}}  
\newcommand{\A}{A}  
\newcommand{\B}{B}  
\newcommand{\C}{C}  
\newcommand{\D}{D}  
\renewcommand{\P}{P}  

\renewcommand{\Re}{\operatorname{Re}}
\renewcommand{\Im}{\operatorname{Im}}

\newcommand{\hilbert}{\mathcal{H}}

\newcommand{\Real}{\ensuremath{\mathbb{R}}}

\newcommand{\Compex}{\ensuremath{\hat{\mathbb{C}}}}
\newcommand{\cl}[1]{\ensuremath{\mathrm{cl}\,{#1}}}
\newcommand{\id}{\ensuremath{\mathbf{I}}}

\newcommand{\srg}[1]{\ensuremath{\mathrm{SRG}\left(#1\right)}} 
\newcommand{\srgc}[1]{\ensuremath{\mathrm{cl}\;\mathrm{SRG}(#1)}} 

\newcommand{\norm}[1]{\left\lVert#1\right\rVert}
\newcommand{\dom}[1]{\ensuremath{\mathcal{D}_{#1}}} 

\newcommand{\maxgain}[1]{\ensuremath{\bar{\sigma}\left(#1\right)}}
\newcommand{\mingain}[1]{\ensuremath{\ubar{\sigma}\left(#1\right)}}
\newcommand{\elltwotau}{\ensuremath{\ell_{2,\tau}^{m}}}
\newcommand{\elltwoinf}{\ensuremath{\ell_{2}^{m}}} 
 
\newcommand{\Tinf}{\ensuremath{\mathbf{T}}}
\newcommand{\Ttau}{\ensuremath{\mathbf{T}^\tau}}

\newcommand{\ubar}[1]{\underaccent{\bar}{#1}}

\begin{document}

\title{Computing Scaled Relative Graphs of Discrete-time LTI Systems from Data}

\author{Talitha Nauta and Richard Pates}
\thanks{The authors are with the Department of Automatic Control, Lund University, Box 118, SE 221 00 LUND, Sweden (e-mail: {\tt\{talitha.nauta, richard.pates\}@control.lth.se}).}
\thanks{The authors are members of the ELLIIT
Strategic Research Area at Lund University. This work was supported by the ELLIIT Strategic Research Area.}

\maketitle

\begin{abstract}
    Graphical methods for system analysis have played a central role in control theory. The \gls{srg} has recently emerged as a useful tool for stability analysis of feedback interconnections. In this paper, we further extend its applicability by showing how the \gls{srg} of a discrete-time \gls{lti} system can be computed exactly from its state-space representation using linear matrix inequalities. We additionally propose a fully data-driven approach where we demonstrate how to compute the \gls{srg} exclusively from input-output data. Furthermore, we introduce a robust version of the \gls{srg}, which can be computed from noisy data trajectories and contains the \gls{srg} of the actual system.
    \glsresetall
\end{abstract}

\section{Introduction}

Graphical system analysis with tools such as Nyquist and Bode diagrams has been foundational for the development of classical control theory. These tools can be used for stability and robustness analysis, as well as loop shaping and system identification. 
Lately, the \gls{srg} has emerged as a new promising graphical tool for the analysis of dynamical systems. Originally introduced by Ryu, Hannah, and Yin in \cite{Ryu2021} to establish rigorous proofs for convergence of optimisation algorithms, the framework was subsequently connected to classical systems theory in~\cite{Chaffey2021,Chaffey2023}. In this field, it has been demonstrated that the \gls{srg} can be used in several areas of control theory, such as stability and robustness analysis of both linear and non-linear \gls{mimo} systems~\cite{Chaffey2021,Chaffey2023,baronprada2025,baronprada2025a,chen2025a,krebbekx2025}. Other examples of applications are in connection with integral quadratic constraints and dissipativity~\cite{chen2025a,degroot2025,Eijnden2024}, model reduction~\cite{Chaffey2022}, and the Lur'e problem~\cite{krebbekx2025a}.

To fully exploit the capabilities of the \gls{srg}, it is crucial to understand how it can be determined. Recently, significant efforts have been made to develop methods for computing, or over-approximating, the \glspl{srg} of operators associated with dynamical systems. The case was solved for bounded linear operators on Hilbert spaces in~\cite{Pates21}, where it was shown that the \gls{srg} can be obtained through a specific mapping of the Numerical Range. 
For stable \gls{lti} systems, initial results were provided in~\cite{Chaffey2021,Chaffey2023}, where \glspl{srg} for single-input single-output systems were obtained using Nyquist-like diagrams.
These results were then generalised to normal \gls{lti} systems via dissipativity results in~\cite{degroot2025}, which also presents techniques for over-approximating \glspl{srg} of \gls{mimo} \gls{lti} systems and certain non-linearities.

General dynamical systems are often modelled as operators over extended spaces. Since these spaces are not Hilbert spaces, constructing \glspl{srg} for dynamical systems is challenging, because the \glspl{srg} are defined specifically over Hilbert spaces. To bridge this gap, the notion of soft and hard \glspl{srg} was introduced in~\cite{chen2025a}, enhancing compatibility with results from integral quadratic constraints and dissipativity. This led to results for over-approximation of \glspl{srg}~\cite{degroot2025,krebbekx2025} and for obtaining hard \glspl{srg}~\cite{krebbekx2025b}. Precise construction of \glspl{srg} for closed linear operators was later shown in~\cite{Nauta25}, where maximum and minimum gain calculations are used to compute the \gls{srg}. This includes operators that are closely related to the notion of soft and hard \glspl{srg}.

Existing results on computing \glspl{srg} for \gls{lti} systems are for continuous-time. In this paper, we extend these results to exact characterisations of \glspl{srg} for discrete-time \gls{lti} \gls{mimo} systems on state-space form. This allows us to further develop the results to the data-driven setting, which we do by connecting the characterisation methods to dissipativity properties, as first proposed in~\cite{degroot2025}. We exploit this connection by building on the results of~\cite{Koch22}, where it is shown that dissipativity properties can be verified using data trajectories.
Our contribution is threefold. First, we show in Section~\ref{sec:statespace} how to compute the \gls{srg} of a discrete-time \gls{lti} system using \glspl{lmi} based on the state-space representation. Secondly, we show how to use \glspl{lmi} based on data trajectories to compute the \gls{srg} of an unknown system in Section~\ref{sec:trajectories}. Lastly, in Section~\ref{sec:noisetrajectories}, we develop these results into robust over-approximations of the \gls{srg} from noisy data trajectories. Examples are shown in Section~\ref{sec:examples}.

\section{Preliminaries}
\label{sec:preliminaries}
\subsection{Notation}
We denote the closure of a set $S$ as $\cl{S}$. The extended complex plane is defined as $\Compex\coloneqq{}\mathbb{C} \cup \{\infty\}$. For a sequence of subsets $S_k\subseteq{}\Compex$, we define the limit as $\lim_{k\rightarrow{}\infty}S_k\coloneqq{}\{z\in\Compex:z=\lim_{k\rightarrow{}\infty}z_k,z_k\in S_k\}.$ We let $(*)^\top \Pi M$ denote $M^\top \Pi M$, where $(*)$ inherits the block structure of~$M$.

$\hilbert$ denotes a Hilbert space over the field $\mathbb{C}$. The space is equipped with an inner product $\langle\cdot,\cdot\rangle: \hilbert~\times~\hilbert~\rightarrow~\mathbb{C}$, which induces a norm $\norm{\cdot} = \sqrt{\langle \cdot,\cdot \rangle}$. The Hilbert space of square-summable sequences over $\mathbb{C}$ is the set of sequences $$\elltwoinf \coloneqq \{ f:  (0, 1,2,\dots) \rightarrow \mathbb{C}^m : \norm{f} < \infty \},$$ with the inner product $\langle u, y \rangle = \sum_{k=0}^\infty \u(k)^* \y(k)$. The Hilbert space $\elltwotau$ is the linear subspace of $\elltwoinf$, where $\tau \in \{1,2,3,\dots\}$ and $f(k) = 0$ when $k \in \{\tau, \tau +1, \tau +2, \dots\}$. 

We call $\T:\dom{\T} \subseteq \hilbert \rightarrow \hilbert$ a linear, possibly unbounded, operator if it satisfies linearity and $\dom{\T}$ is a linear manifold. The operator is closed if its graph is a closed subset of $\hilbert \times \hilbert$. We denote the identity operator by $\id$. The maximum and minimum gain of an operator $\T$ we define by 
\begin{align}
      \maxgain{\T}
   \coloneqq{}
   \sup_{\substack{\u \in \dom{\T} \\ \u \neq 0}} \frac{\norm{\T\u}}{\norm{\u}}
 \;\;\text{  and  }\;\;
      \mingain{\T}
   \coloneqq{}
   \inf_{\substack{\u \in \dom{\T} \\ \u \neq 0}}  \frac{\norm{\T \u}}{\norm{\u}}. 
\end{align}

\subsection{Scaled Relative Graphs}
The \gls{srg} is a subset of the extended complex plane. For a possibly unbounded linear operator $\T:\dom{\T} \subseteq \hilbert \rightarrow \hilbert$ we define the \gls{srg} as follows:
\begin{align}
    \srg{\T} \!
    \coloneqq{} \!\!
    \left\{ \frac{\| \y \|}{\| \u\|} \text{exp} \left(\! \pm i \angle(\u,\y) \right) \! : \!
    u \in \dom{\T} \backslash \{ 0 \} , \y = \T \u \right\}\!,
\end{align}
where the angle $\angle(u,y)$ between $u \in \mathcal{H}$ and $y \in \mathcal{H}$ is defined by its inner product
\begin{align}
    \cos{(\angle(u,y))} = \frac{\text{Re}(\langle y,u \rangle)}{\| y \| \| u \|} \text{ where } \angle(u,y) \coloneqq 0 \text{ if } y = 0.
\end{align}
This naturally extends the usual definition in \cite{Ryu2021} to cover operators whose domain is not necessarily equal to $\hilbert$. 
Note that since the operators we consider are linear, the \textit{relative} part of the definition is not required.
The \gls{srg} characterises certain geometric properties of the operator $\T$. 
The term $\|y\|/\|u\|$ quantifies the gain, and the exponent captures the phase shift between the input and output~\cite{Chaffey2021,Chaffey2023}.

\section{SRGs of discrete-time state-space models}
\label{sec:statespace}
In the following section, we show how to compute \glspl{srg} of operators associated with discrete-time \gls{lti} \gls{mimo} systems. We consider a system for which there exists a minimal (controllable and observable) realisation of the form
\begin{equation}
\label{eq:system}
\begin{aligned}
  \x_{\k+1} &= \A\x_{\k} + \B\u_{\k}, \,\, \x_0 = 0, \\
  \y_{\k} &= \C\x_{\k} + \D\u_{\k},
\end{aligned}
\end{equation}
where the state $\x_{\k} \in \Real^{n}$, input $\u_{\k} \in \Real^{m}$, and the output $\y_{\k} \in \Real^{m}$. Note that we require the number of inputs and outputs to be equal. We define two types of operators from $\u$ to $\y$ associated with this system.
\begin{definition}
\label{def:Ttau}
    Given a $\tau \in \{1,2,3,\dots\}$ we define the operator
    $\Ttau: \elltwotau \rightarrow \elltwotau$,
   implicitly so that the input maps to the output as follows 
\begin{equation}
\begin{gathered}
    (u_0, u_1, \cdots, u_\tau, 0, \cdots) \mapsto  \left(Du_0, CBu_0\!+\!Du_1, \! \ldots \!, \Sigma_{i=0}^{\tau-1} CA^{\tau-i-1}Bu_i + Du_\tau, 0,\!\ldots \right) \! .
    \end{gathered}
\end{equation}
Note that both the in- and outputs are truncated so that $u_i=0$ and $y_i=0$ for $i > \tau$, which ensures they lie in $\elltwotau$. 
\end{definition}

\begin{definition}
\label{def:Tinf}
    We define the operator 
    $\Tinf: \dom{\Tinf} \subseteq \elltwoinf \rightarrow \elltwoinf$,
    implicitly by the map
    \begin{equation}
\begin{gathered}
    (u_0, u_1, \cdots, u_j, \ldots) \mapsto  \left(Du_0, CBu_0 + Du_1, \cdots, \Sigma_{i=0}^{j-1} CA^{j-i-1}Bu_i + Du_j, \ldots \right).
    \end{gathered}
\end{equation}
    where the domain is restricted such that all outputs lie in $\elltwoinf$, meaning that $\dom{\Tinf} \coloneqq{} \{\u\in\elltwoinf:\y\in\elltwoinf\}$. 
\end{definition}
From~\cite{chen2025a} it follows that both $\lim_{\tau \rightarrow \infty} \srg{\Ttau}$ and $\srg{\Tinf}$ are interesting for stability and robustness analysis of the system.
It was shown in~\cite{Nauta25} that the \gls{srg} of any closed linear operator can be computed through maximum and minimum gain calculations. As both $\Ttau$ and $\Tinf$ are closed, we can use the following result based on \cite[Theorem 2]{Nauta25}. 
\begin{corollary}
\label{cor:circle_thm}
    Consider the operators $\Ttau$ and $\Tinf$ as defined in Definition \ref{def:Ttau} and \ref{def:Tinf}. Then  $\lim_{\tau \rightarrow \infty} \srg{\Ttau}$ is equal to
    \begin{equation}
         \bigcap_{\alpha \in \Real} \left\{ \alpha + z : \!\lim_{\tau \rightarrow \infty}  \mingain{\Ttau - \! \alpha \id} \leq |z| \!  \leq \!  \lim_{\tau \rightarrow \infty}  \maxgain{\Ttau - \! \alpha \id} \right\},
    \end{equation}
    and $\srgc{\Tinf}$ is equal to
    \begin{equation}
         \bigcap_{\alpha \in \Real} \left\{ \alpha + z :\mingain{\Tinf - \alpha \id} \leq |z| \leq \maxgain{\Tinf - \alpha \id} \right\}.
    \end{equation}
\end{corollary}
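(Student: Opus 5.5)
The plan is to reduce both claims to \cite[Theorem 2]{Nauta25}. That result says that for any closed linear operator $\T$ on a Hilbert space, $\srgc{\T}$ equals the intersection over $\alpha\in\Real$ of the annuli $\{\alpha+z:\mingain{\T-\alpha\id}\le|z|\le\maxgain{\T-\alpha\id}\}$.

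For $\Tinf$, I would first check closedness and then apply the theorem directly. To show $\Tinf$ is closed, take $u^{(j)}\to u$ and $\Tinf u^{(j)}\to y$ in $\elltwoinf$. Each output coordinate $(\Tinf u)_k$ is a finite linear combination of $u_0,\dots,u_k$, so convergence in $\elltwoinf$ forces $y_k=(\Tinf u)_k$ coordinatewise. Since $y\in\elltwoinf$, this gives $u\in\dom{\Tinf}$ and $y=\Tinf u$. The linearity of $\dom{\Tinf}$ is immediate. The second formula in the corollary is then exactly \cite[Theorem 2]{Nauta25} with $\T=\Tinf$.

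For $\Ttau$, the space $\elltwotau$ is finite-dimensional, isomorphic to $\mathbb{C}^{m\tau}$, and $\Ttau$ acts on it as a block lower-triangular Toeplitz matrix. It is therefore bounded and closed. \cite[Theorem 2]{Nauta25} then gives, for each $\tau$, $\srgc{\Ttau}=\bigcap_\alpha\{\alpha+z:\mingain{\Ttau-\alpha\id}\le|z|\le\maxgain{\Ttau-\alpha\id}\}$. In fact $\srg{\Ttau}$ is already closed, being the image of a compact unit sphere under a continuous map, with conjugation symmetry; continuity needs care only where $y=0$, where the angle is set to zero.

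The remaining step, which I expect to be the main obstacle, is exchanging the limit in $\tau$ with the intersection over $\alpha$. Two monotonicity facts drive this. First, $\elltwotau$ embeds in $\elltwotau$ for larger $\tau$, and by causality and truncation $\Ttau$ restricted to the smaller space agrees with the smaller operator. Hence $\srg{\Ttau}$ is nested increasing, so the set limit (as defined in the Notation section) is the closure of the union. Second, $\maxgain{\Ttau-\alpha\id}$ is nondecreasing and $\mingain{\Ttau-\alpha\id}$ is nonincreasing in $\tau$, so both limits exist in $[0,\infty]$.

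The inclusion ``$\subseteq$'' follows because each $\srg{\Ttau}$ lies in the $\tau$-annulus, which lies in the limiting annulus, and the limiting annulus is closed. For ``$\supseteq$'', I would argue through the construction in \cite{Nauta25}. A point of the limiting intersection is characterised via supremum and infimum gains of $\Ttau-\alpha\id$, and these are approximately attained by inputs in some finite $\elltwotau$. Convexity-type arguments, using the fact that a linear operator's SRG is built from the numerical-range mapping \cite{Pates21}, then place nearby points in $\srg{\Ttau}$ for large $\tau$. If a direct argument is awkward, I would instead show that the union of the $\elltwotau$ is dense in $\elltwoinf$ and identify the limit with the SRG of the (possibly unbounded) limit operator defined on finitely supported sequences. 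I would then apply \cite[Theorem 2]{Nauta25} to its closure, checking that the min and max gains of that closure coincide with the limits of the gains of $\Ttau-\alpha\id$.
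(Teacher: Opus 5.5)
For $\Tinf$ your argument matches the paper's: closedness, which you verify correctly, plus \cite[Theorem 2]{Nauta25}. The paper's justification for the $\Ttau$ part is also just closedness plus that citation, so you are right that interchanging $\lim_{\tau\to\infty}$ with $\bigcap_{\alpha}$ is the real content there. However, your proposal does not establish ``$\supseteq$''.

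Route (a) only restates the goal. For each fixed $\alpha$ the limiting gains are approached at some finite $\tau$, but that $\tau$ depends on $\alpha$. So nothing places a point of the limiting intersection near $\srg{\Ttau}=\bigcap_\alpha\{\cdots\}$ for a single $\tau$; you need an argument that is uniform in $\alpha$.

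Route (b) is false, because $\lim_{\tau\to\infty}\srg{\Ttau}$ is generally not the SRG of the closure of a limit operator on finitely supported sequences. Take $y_k=u_k-2u_{k-1}$, i.e.\ $A=0$, $B=1$, $C=-2$, $D=1$, which is minimal and stable. For $u=(1,2,4,\dots,2^{\tau-1},0,\dots)$ one gets $\Ttau u=(1,0,\dots,0)$. Hence $\lim_{\tau\to\infty}\mingain{\Ttau}=0$ and $0\in\lim_{\tau\to\infty}\srg{\Ttau}$. Yet the limit operator on finitely supported inputs is the FIR convolution itself, whose closure is $\Tinf$ with $\mingain{\Tinf}=\min_\omega|1-2e^{i\omega}|=1$. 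Truncation discards the free response after time $\tau$. This is exactly the distinction between $P\succeq0$ and sign-indefinite $P$ in Theorem~\ref{thm:LMIs}, and for unstable systems that free response is not even in $\elltwoinf$.

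Your nestedness argument also fails as stated. Restricting $\mathbf{T}^{\tau'}$ to $\elltwotau$ does not give $\Ttau$, because it keeps the outputs at times $\tau,\dots,\tau'-1$. In the example, $u=(1,0,\dots)$ gives the point $1$ when the horizon is $\{0\}$, but $1\pm2i$ when it is $\{0,1\}$. Nestedness of $\srg{\Ttau}$ and monotonicity of $\mingain{\Ttau-\alpha\id}$ do hold, but via the delay embedding $u\mapsto(0,\dots,0,u)$, using $x_0=0$ and time invariance.

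To close the gap, let $W_\tau=\{(\norm{\Ttau u}^2,\Re\langle\Ttau u,u\rangle):\norm{u}=1\}\subset\Real^2$.
\begin{itemize}
\item $W_\tau$ is convex, by Toeplitz--Hausdorff for the Hermitian pair $(\Ttau)^*\Ttau$, $\tfrac12(\Ttau+(\Ttau)^*)$; this is the mechanism behind \cite{Pates21}.
\item $W_\tau$ is nested under the delay embedding.
\item The map $w\mapsto(|w|^2,\Re w)$ sends $\srg{\Ttau}$ onto $W_\tau$.
\end{itemize}
Since $|w-\alpha|^2=a-2\alpha b+\alpha^2$, the limiting annulus at $\alpha$ corresponds, in the coordinates $(a,b)$, to the two supporting half-planes of the closed convex set $W=\cl{\bigcup_\tau W_\tau}$ with normals $\pm(1,-2\alpha)$. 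This uses that support functions of increasing unions are the limits of the support functions.

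A finite point of the intersection whose image is not in $W$ would be strictly separated from $W$. Every normal except $(0,\pm1)$ is of the form above, and $(0,\pm1)$ can be tilted to $(-\epsilon,\pm1)$ because $a\ge0$ on $W$. So the image lies in $W$, and mapping back with $(a,b)\mapsto b\pm i\sqrt{a-b^2}$ gives ``$\supseteq$''.

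Equivalently, you can apply \cite[Theorem 2]{Nauta25} once to the closed operator $\bigoplus_\tau\Ttau$. Its gains are the limits, and its closed SRG is $\cl{\bigcup_\tau\srg{\Ttau}}$ by the same convexity argument.
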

This means that $\lim_{\tau \rightarrow \infty} \srg{\Ttau}$ and $\srg{\Tinf}$ can be calculated to arbitrary precision by computing the maximum and minimum gain over a grid of $\alpha$, see Figure \ref{fig:bounding_ann}. Such an $\alpha$ grid could, for example, be equally spaced points centred around zero. The computation of gains for $\Tinf$ and $\lim_{\tau \rightarrow \infty} \Ttau$ can be done using the state-space formulas below. 
\begin{figure}
    \centering
      \begin{tikzpicture}[every to/.style={hyperbolic plane},scale=0.61,>={Stealth[scale=1]}]  
       \pgfmathsetmacro{\myalphaone}{-2.5}  
       \pgfmathsetmacro{\maxrone}{3.2566}
       \pgfmathsetmacro{\minrone}{1}
       \pgfmathsetmacro{\myalphatwo}{-1.3}  
       \pgfmathsetmacro{\maxrtwo}{2.0999}
       \pgfmathsetmacro{\minrtwo}{0.5262}
       \pgfmathsetmacro{\myalphathree}{1.5}  
       \pgfmathsetmacro{\maxrthree}{3.3423}
       \pgfmathsetmacro{\minrthree}{0.8740}

        \let\radius\undefined
        \newlength{\radius}
        \setlength{\radius}{1.5pt}

        \begin{scope}
          \clip (\myalphaone,0) circle (\maxrone);
          \clip (\myalphatwo,0) circle (\maxrtwo);
          \clip (\myalphathree,0) circle (\maxrthree);

          \fill[black!20] (-5,-5) rectangle (5,5);

          \fill[white] (\myalphaone,0) circle (\minrone);
          \fill[white] (\myalphatwo,0) circle (\minrtwo);
          \fill[white] (\myalphathree,0) circle (\minrthree);

        \end{scope}

         \draw[thick, green, even odd rule] (\myalphaone,0) circle (\maxrone) (\myalphaone,0) circle (\minrone);
        \draw[thick, blue, even odd rule] (\myalphatwo,0) circle (\maxrtwo) (\myalphatwo,0) circle (\minrtwo);
        \draw[thick, red, even odd rule] (\myalphathree,0) circle (\maxrthree) (\myalphathree,0) circle (\minrthree);

        \input{data/SRG/corollary_example2.tex}

        \fill (\myalphaone,0) circle[radius=\radius];
        \fill (\myalphatwo,0) circle[radius=\radius];
        \fill (\myalphathree,0) circle[radius=\radius];
        \node[below, green] at (\myalphaone,0) {$\alpha_1$};
        \node[below, blue] at (\myalphatwo,0) {$\alpha_2$};
        \node[below, red] at (\myalphathree,0) {$\alpha_3$};
        
        \draw[->] (-6.5,0) -- (6,0) node[below]{$\Re$};
        \draw[->] (0,-3.4) -- (0,3.5) node[left]{$\Im$};
        \draw (\myalphaone,.05) -- (\myalphaone,-.15);
        \draw (\myalphatwo,.05) -- (\myalphatwo,-.15);
        \draw (\myalphathree,.05) -- (\myalphathree,-.15);
    \end{tikzpicture}
    \caption{The figure shows a \gls{srg} (orange) and its approximation (grey) obtained by the gain bounds for $\{ \alpha_1, \alpha_2, \alpha_3\}$ computed following Corollary~\ref{cor:circle_thm}.
    }
    \label{fig:bounding_ann}
\end{figure}

\begin{theorem}
    Consider the operators $\Ttau$ and $\Tinf$ as defined in Definition \ref{def:Ttau} and \ref{def:Tinf}, coming from a minimal state-space realisation as in~\eqref{eq:system}. Then 
    \begin{align}    
    & \lim_{\tau \rightarrow \infty} \maxgain{\Ttau} = \inf \gamma \\
    & \text{such that there exists a}\;\; \P = \P^\top \succeq 0 \;\; \text{satisfying}: \\
    & \begin{bmatrix}
        \A^\top \P \A -\P + \C^\top \C & \A^\top \P \B +  \C^\top \D   \\
        \B^\top \P \A + \D^\top \C & \B^\top \P \B + \D^\top \D - \gamma^2 I
    \end{bmatrix}  \preceq 0
    \label{eq:LMI_maxgain}
    \end{align}
    and 
    \begin{align}    
    & \lim_{\tau \rightarrow \infty} \mingain{\Ttau} = \sup \zeta \\
    & \text{such that there exists a}\;\; \P = \P^\top \succeq 0 \;\; \text{satisfying}: \\
    & \begin{bmatrix}
        \A^\top \P \A -\P - \C^\top \C & \A^\top \P \B -  \C^\top \D   \\
        \B^\top \P \A - \D^\top \C & \B^\top \P \B - \D^\top \D + \zeta^2 I
    \end{bmatrix}  \preceq 0.
    \label{eq:LMI_mingain}
    \end{align}
    If we remove the constraint $\P \succeq 0$ from \eqref{eq:LMI_maxgain} and \eqref{eq:LMI_mingain}, the \glspl{lmi} give $\maxgain{\Tinf} = \inf \gamma$ and $\mingain{\Tinf} = \sup \zeta$ instead.  
\label{thm:LMIs}
\end{theorem}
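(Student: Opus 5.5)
The plan is to read both LMIs as dissipation inequalities with storage function $V(x)=x^\top \P x$. I would treat the max-gain and min-gain cases together, since the min-gain LMI is the max-gain LMI with supply rate $\norm{y}^2-\zeta^2\norm{u}^2\ge 0$ replacing $\gamma^2\norm{u}^2-\norm{y}^2\ge 0$. Multiplying \eqref{eq:LMI_maxgain} on both sides by $[x_k^\top\; u_k^\top]$ (and its transpose), with $x_{k+1}=\A x_k+\B u_k$ and $y_k=\C x_k+\D u_k$, gives
\begin{equation}
V(x_{k+1})-V(x_k)\le \gamma^2|u_k|^2-|y_k|^2 .
\end{equation}
For complex signals I would apply this to real and imaginary parts separately, since the system matrices are real. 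Summing from $k=0$ to $\tau$ with $x_0=0$ gives
\begin{equation}
x_{\tau+1}^\top \P x_{\tau+1}+\sum_{k=0}^{\tau}|y_k|^2\le\gamma^2\sum_{k=0}^{\tau}|u_k|^2 .
\end{equation}

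\textbf{Sufficiency.} For $\Ttau$, assume $\P\succeq0$. Then the terminal term can be dropped for every $\tau$, so $\maxgain{\Ttau}\le\gamma$. Likewise the min-gain LMI gives $\sum|y_k|^2\ge\zeta^2\sum|u_k|^2+x_{\tau+1}^\top\P x_{\tau+1}$, hence $\mingain{\Ttau}\ge\zeta$.

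For $\Tinf$ with $\P$ indefinite, I would use $u\in\dom{\Tinf}$, so both $u$ and $y$ are in $\elltwoinf$. Observability plus $y\in\ell_2$ and $u\in\ell_2$ imply $x_k\to0$, because $x_k$ is recovered linearly from the window $(y_k,\dots,y_{k+n-1},u_k,\dots,u_{k+n-1})$, which tends to zero. The terminal term therefore vanishes as $\tau\to\infty$, and the same inequalities hold for $\Tinf$.

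\textbf{Necessity.} I would show the LMIs are tight via the KYP / dissipativity characterisation.
\begin{itemize}
\item For $\Ttau$: the map $\tau\mapsto\maxgain{\Ttau}$ is nondecreasing, since extending by zero embeds $\elltwotau$ into $\ell^m_{2,\tau+1}$, and its limit $\bar\gamma$ satisfies $\sum_{k\le\tau}|y_k|^2\le\bar\gamma^2\sum_{k\le\tau}|u_k|^2$ for every finite horizon, from $x_0=0$.
\item This is exactly dissipativity of the state-space system on finite horizons. For $\gamma>\bar\gamma$, the available storage
\begin{equation}
V_a(x)=\sup_{u,\,\tau}\sum_{k=0}^{\tau}\bigl(|y_k|^2-\gamma^2|u_k|^2\bigr)\quad\text{from } x_0=x
\end{equation}
is finite, by controllability/reachability of every $x$ from zero together with the finite-horizon bound, and it is nonnegative.
\item $V_a$ satisfies the dissipation inequality and is a quadratic form by linearity (standard Willems theory). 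This gives $\P\succeq0$ satisfying \eqref{eq:LMI_maxgain}; the infimum then equals $\bar\gamma$ by taking $\gamma\downarrow\bar\gamma$.
\item The min-gain case is identical with the reversed supply rate.
\item For $\Tinf$, the gain involves only $\ell_2$ trajectories, which corresponds to the frequency-domain inequality on the unit circle, $G(e^{i\omega})^*G(e^{i\omega})\preceq\gamma^2 I$ (resp.\ $\succeq\zeta^2 I$), valid even for unstable $A$ via stable coprime factor arguments. The discrete KYP lemma with minimality then yields a symmetric, possibly indefinite, $\P$.
\end{itemize}

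\textbf{Main obstacle.} I expect the $\Tinf$ necessity direction to be hardest. Two things must be established there: that $\maxgain{\Tinf}$ and $\mingain{\Tinf}$ equal the frequency-wise extremal singular values of $G$ on the unit circle when $A$ has unstable or unit-circle eigenvalues, and that the non-strict KYP lemma still produces a feasible $\P$ when the inequality is not strict. I would handle the latter by working with $\gamma$ strictly above (resp.\ $\zeta$ strictly below) the optimum, taking limits only in the value $\inf\gamma$.
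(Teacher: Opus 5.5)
Your proof is correct in outline, but it takes a genuinely different route from the paper. The paper argues almost entirely by reduction to known lemmas:
the max-gain formula for the truncated operators is quoted from the Bounded Real Lemma. The min-gain formula is reduced to it by a case split on $D$. If $D$ is singular, an input supported at the last instant with $Du_\tau=0$ gives gain $0$, and with $P\succeq 0$ the $(2,2)$ block forces $\zeta=0$. If $D$ is invertible, the paper uses $\mingain{\Ttau}=1/\maxgain{(\Ttau)^{-1}}$ together with an explicit congruence: a block permutation exchanging the roles of $u$ and $y$, then the change of variables $(x,u)\mapsto(x,y)$. This turns the min-gain LMI into the max-gain LMI of the inverse realisation, with $\gamma=1/\zeta$ and $\widetilde P=P/\zeta^2$. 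The $\Tinf$ statements are quoted from the non-strict KYP lemma under controllability.

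You instead prove sufficiency directly by telescoping the dissipation inequality. This includes the observability argument ($u,y\in\ell_2\Rightarrow x_k\to0$) that makes an indefinite $P$ legitimate for $\Tinf$, which the paper leaves implicit in its citation. You prove necessity for the truncated operators via Willems' available storage. Your route treats the max and min cases uniformly, with no inversion and no case split on $D$. It also needs neither stability of $A$ nor strict inequalities, both of which the usual statement of the Bounded Real Lemma presupposes. The paper's route is much shorter and exposes the structural fact that the min-gain LMI is the max-gain LMI of the inverse system.

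Two steps need more than ``identical'' and ``standard''.

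\emph{Truncated min-gain case.} Here $\tau\mapsto\mingain{\Ttau}$ is nonincreasing by a time shift (prepend a zero input), not by zero-extension. Moreover, both $V_a\ge0$ and the fact that $V_a$ is a quadratic form rest on being able to append one step with nonnegative contribution. That is what makes the finite-horizon values $V_\tau$ (each a quadratic form by the parallelogram law) nonnegative and nondecreasing in $\tau$, so that $V_a$ is their pointwise limit. For max gain, $u=0$ does this. For min gain, $u=0$ contributes $-|Cx|^2\le0$, so you must instead append some $u$ with $|Cx+Du|\le\zeta|u|$. Such a $u$ exists for $\zeta>0$: take $u=-D^{-1}Cx$ if $D$ is invertible, otherwise $u$ large along a unit vector $v$ with $|Dv|<\zeta$.

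\emph{The operator $\Tinf$.} Pushing $\zeta$ strictly below the optimum does not let you avoid the non-strict KYP lemma when $A$ has eigenvalues on the unit circle. For $x_{k+1}=x_k+u_k$, $y_k=x_k$ one has $\mingain{\Tinf}=1/2$, attained only by $P=-1/2$, while the strict lemma presupposes that $A$ has no eigenvalues on the unit circle. You need the controllability-based non-strict lemma, which is what the paper invokes. Your observability argument, with its converse, also shows that $\dom{\Tinf}$ is exactly the set of inputs of square-summable state--input trajectories from $x_0=0$. Parseval, with windowed exponentials for the converse direction, then connects the gain bound directly to the condition $\gamma^2|u|^2-|Cx+Du|^2\ge0$ whenever $e^{i\omega}x=Ax+Bu$. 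So the coprime-factor detour you flag as the main obstacle is not needed.
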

\begin{remark}
    For $\Tinf - \alpha \id$ and $\lim_{\tau \rightarrow \infty} \Ttau - \alpha \id$ the gains can be obtained by substituting $D$ with $D - \alpha I$ in~\eqref{eq:LMI_maxgain} and~\eqref{eq:LMI_mingain}.
\end{remark}
\begin{proof} 
    The formula for $\lim_{\tau \rightarrow \infty} \maxgain{\Ttau}$ follows directly from the Bounded Real Lemma. For $\lim_{\tau \rightarrow \infty} \mingain{\Ttau}$, we will show that we can rewrite the problem as a maximum gain problem for the inverted system. The operator $\Ttau$ is invertible if and only if $D$ is non-singular. If $D$ is singular we can take an input trajectory $(0, 0, \cdots, 0, u_\tau, 0, \cdots)$ such that $u_\tau \neq 0$ but $Du_\tau = 0$, which means $y_k = 0$ for all $k$, and the minimum gain is 0. In this case, \eqref{eq:LMI_mingain} is only feasible for $\zeta = 0$, as $D^\top D$ has at least one zero eigenvalue. If $D$ is non-singular, the inverse system is given by 
    \begin{equation}
    \label{eq:inv_system}
    \begin{aligned}
      \x_{\k+1} &= (\A - \B\D^{-1}\C)\x_{\k} + \B\D^{-1}\y_{\k} = \Bar{A}\x_{\k} + \Bar{B}\y_{\k} \\
      \u_{\k} &= -\D^{-1}\C\x_{\k} + \D^{-1}\y_{\k} = \Bar{C}\x_{\k} + \Bar{D}\y_{\k}, \,\, \x_0 = 0.
    \end{aligned}
    \end{equation}
    For any invertible operator, $\mingain{\T} = 1/\maxgain{\T^{-1}}$ and therefore, $$\lim_{\tau \rightarrow \infty} \mingain{\Ttau} = \lim_{\tau \rightarrow \infty} 1/\maxgain{(\Ttau)^{-1}}.$$ We will now show that the \gls{lmi} in \eqref{eq:LMI_mingain} is equivalent to the \gls{lmi} in \eqref{eq:LMI_maxgain} when the state-space matrices are substituted with $\Bar{A}, \Bar{B}, \Bar{C}$ and $\Bar{D}$ and $\gamma = 1/\zeta$. The \gls{lmi} in  \eqref{eq:LMI_mingain} can be rewritten as
    \begin{equation}
    \begin{aligned}
    \setlength\arraycolsep{4pt} 
    \underbrace{\left(
    \begin{array}{cc}
    \star & \star \\ 
    \star & \star \\ 
    \star & \star \\ 
    \star & \star
    \end{array}
    \right)^\top}_{M^\top} 
    \underbrace{\left(
    \begin{array}{cccc}
    P&0&0&0\\
    0&-I&0&0\\
    0&0&-P&0\\
    0&0&0&\zeta^2I
    \end{array}\right)}_{\Pi}
     \underbrace{\left( \!\!
    \begin{array}{cc}
    A & B \\
    C & D \\ 
    I & 0 \\ 
    0 & I
    \end{array} \!\!
    \right)}_{M} \preceq 0.
    \end{aligned}
    \label{eq:LMI_big}
    \end{equation}
    We define the matrices
    \begin{equation}
        Q \coloneqq \begin{pmatrix}
            I & 0 & 0 & 0 \\
            0 & 0 & 0 & I \\
            0 & 0 & I & 0 \\
            0 & I & 0 & 0 \\
        \end{pmatrix} \quad Z \coloneqq \begin{pmatrix}
            I & 0 \\
            C & D
        \end{pmatrix}^{-1}.
    \end{equation}
    As $Q^2 = (Q^\top)^2 = I$ and $Z$ is invertible, the \gls{lmi} in \eqref{eq:LMI_big} is equivalent to
    $$Z^\top M^\top(Q^\top)^2 \Pi Q^2 M Z \preceq 0,$$
    which gives 
    \begin{equation}
    \begin{aligned}
    \setlength\arraycolsep{4pt} 
    \left(
    \begin{array}{cc}
    \star & \star \\ 
    \star & \star \\ 
    \star & \star \\ 
    \star & \star
    \end{array}
    \right)^\top 
    \begin{pmatrix}
    \widetilde{P}&0&0&0\\
    0&I&0&0\\
    0&0&-\widetilde{P}&0\\
    0&0&0&-\gamma^2I
    \end{pmatrix}
     \left( \!\!
    \begin{array}{cc}
    \Bar{A} & \Bar{B} \\
    \Bar{C} & \Bar{D} \\ 
    I & 0 \\ 
    0 & I
    \end{array} \!\!
    \right) \preceq 0,
    \end{aligned}
    \label{eq:LMI_inv}
    \end{equation}
    where $\gamma = \frac{1}{\zeta}$ and $\widetilde{P} = \frac{1}{\zeta^2}P$. Then the Bounded Real Lemma gives that the smallest $\gamma$ to satisfy \eqref{eq:LMI_inv} is $\lim_{\tau \rightarrow \infty} 1/\maxgain{(\Ttau)^{-1}}$. Therefore, the maximum $\zeta$ to satisfy \eqref{eq:LMI_mingain} gives $\lim_{\tau \rightarrow \infty} \mingain{\Ttau}$. 
    If we remove the constraint $P \succeq 0$, the result for both the maximum and minimum gain follow from the KYP lemma, as the state-space realisation is assumed to be controllable~\cite[Theorem~1.4]{megretski2010}.
\end{proof}

\section{SRGs from data trajectories}
\label{sec:trajectories}
From now on, we assume that the underlying description of the operators $\Ttau$ and $\Tinf$ are unknown, but we have an input-output trajectory $(\u_\k,\y_\k)_{k = 0}^{N-1}$ of the system available. In this section, we show that it is possible to compute the maximum and minimum gain of $\lim_{\tau \rightarrow \infty} \Ttau$ and $\Tinf$ based on the trajectories, following the results in \cite{Koch22}. 
This allows us to compute the \gls{srg} using Corollary \ref{cor:circle_thm} from a single input-output trajectory.
We start by assuming noise-free measurements, and then we extend the result to noisy measurements in Section~\ref{sec:noisetrajectories}. 

To be able to use input-output trajectories for the gain computations, we make two assumptions. First, the data trajectory that we use needs to contain enough information. This can be assured by requiring that the input trajectory is sufficiently persistently exciting.
\begin{definition}
    The sequence $(u_k)_{k=0}^{N-1}$ is persistently exciting of order $L$ if the corresponding Hankel matrix
    \begin{equation}
        \begin{pmatrix}u_0 & u_1 & \dots & u_{N-L}\\
        u_1 & u_2 & \dots & u_{N-L+1}\\
        \vdots & \vdots & \ddots & \vdots\\
        u_{L-1} & u_L & \dots & u_{N-1}
        \end{pmatrix}
    \end{equation}
    has rank $mL$.
\end{definition}

Secondly, we assume knowledge of an upper bound on the lag of the observable state-space representation associated with the operators $\Ttau$ and $\Tinf$. The lag is defined as follows.
\begin{definition}
    The lag of a system on state-space form as in~\eqref{eq:system}, is the smallest $l  \in \{1,2,3,\dots\}$ such that the matrix
    \begin{equation}
            \begin{pmatrix}
                 C^\top & (CA)^\top & \cdots & (CA^{l-1})^\top
            \end{pmatrix}^\top          
    \end{equation}
    has rank $n$, where $n$ is the number of states.
\end{definition}
To allow us to perform maximum and minimum gain computations of the operators $\Ttau$ and $\Tinf$ using \glspl{lmi} based on the data, we first define 
\begin{equation}
    \xi_k \! = \! \begin{pmatrix}
        u_{k-l}^\top \! & \! u_{k-l+1}^\top & \!\! \cdots \!\! & u_{k-1}^\top \! & \! y_{k-l}^\top \! & \! y_{k-l+1}^\top &\!\! \cdots \!\!& y_{k-1}^\top
    \end{pmatrix}^\top \!,
    \label{eq:xi}
\end{equation}
for some $l$ greater than or equal to the lag of the state-space representation associated with the
operators. Then we organise the data in the following matrices
\begin{equation}
\setlength\arraycolsep{3pt}
\begin{aligned}
\Xi &\coloneqq \begin{pmatrix} \xi_l & \xi_{l+1} & \cdots & \xi_{N-1}\end{pmatrix}, \,\,
\Xi_{+} \coloneqq \begin{pmatrix} \xi_{l+1} & \xi_{l+2} & \cdots & \xi_{N}\end{pmatrix} \\
U_\Xi &\coloneqq \begin{pmatrix} \u_l & \u_{l+1} & \cdots & \u_{N-1}\end{pmatrix}, \,\,
Y_\Xi \coloneqq \begin{pmatrix} \y_l & \y_{l+1} & \cdots & \y_{N-1}\end{pmatrix}\!.
\end{aligned}
\label{eq:sequences}
\end{equation} 
This leads us to the main result of the section. 
\begin{theorem}
    Consider the operators $\Ttau$ and $\Tinf$ as defined in Definition \ref{def:Ttau} and \ref{def:Tinf}. Let $l$ be greater than or equal to the lag of the associated system in \eqref{eq:system}, and let $(u_k, y_k)_{k=0}^{N-1}$ be an input-output trajectory of the associated system, where $(u_k)_{k=0}^{N-1}$ is persistently exciting of order $n + l + 1$. Then
    \begin{align}    
    & \lim_{\tau \rightarrow \infty} \maxgain{\Ttau} = \inf \gamma \\
    & \text{such that there exists a}\;\; \P = \P^\top \succeq 0 \;\; \text{satisfying}:\\
    &  \Xi_{+}^\top P \Xi_{+} - \Xi^\top P \Xi + Y_\Xi^\top Y_\Xi - \gamma^2 U_\Xi^\top U_\Xi \preceq 0
    \label{eq:LMI_max_extended}
    \end{align}
    and 
    \begin{align}    
    & \lim_{\tau \rightarrow \infty} \mingain{\Ttau} = \sup \zeta \\
    & \text{such that there exists a}\;\; \P = \P^\top \succeq 0 \;\; \text{satisfying}:\\
    &  \Xi_{+}^\top P \Xi_{+} - \Xi^\top P \Xi - Y_\Xi^\top Y_\Xi + \zeta^2 U_\Xi^\top U_\Xi \preceq 0
    \label{eq:LMI_min_extended}
    \end{align}
     Without the constraint $\P \succeq 0$, we get $\maxgain{\Tinf} = \inf \gamma$ and $\mingain{\Tinf} = \sup \zeta$ instead from \eqref{eq:LMI_max_extended} and \eqref{eq:LMI_min_extended}, respectively. 
\label{thm:LMIs_extended}
\end{theorem}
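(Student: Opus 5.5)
The plan is to reduce Theorem~\ref{thm:LMIs_extended} to Theorem~\ref{thm:LMIs} in two stages. First, pass from the minimal realisation \eqref{eq:system} to a non-minimal realisation whose state is the measurable vector $\xi_k$ in \eqref{eq:xi}. Then use persistency of excitation to show that the data-based \glspl{lmi} \eqref{eq:LMI_max_extended}--\eqref{eq:LMI_min_extended} are equivalent to the model-based ones for that realisation.

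\textbf{The extended realisation.} Since $l$ is at least the lag, the observability matrix $\mathcal{O}_l=(C^\top,\dots,(CA^{l-1})^\top)^\top$ has full column rank. Hence $x_k$ is a linear function of $\xi_k$: there is a matrix $T$ with $x_k=T\xi_k$, obtained from $x_k=A^l x_{k-l}+(\text{input terms})$ and $x_{k-l}=\mathcal{O}_l^{\dagger}(\text{outputs}-\text{Toeplitz}\cdot\text{inputs})$. Consequently there exist matrices $\tilde A,\tilde B,\tilde C,\tilde D$ with
\[
\xi_{k+1}=\tilde A\xi_k+\tilde B u_k,\qquad y_k=\tilde C\xi_k+\tilde D u_k,\qquad \tilde D=D,
\]
where $\tilde A$ consists of shift blocks plus the row produced by $\tilde C$. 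This realisation is observable-equivalent but generally not minimal.

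\textbf{Transferring Theorem~\ref{thm:LMIs}.} I would show that the infima and suprema in Theorem~\ref{thm:LMIs}, written in terms of $(\tilde A,\tilde B,\tilde C,\tilde D)$ with $P$ of size $l(2m)$, coincide with those for the minimal realisation. For one direction, any $P$ feasible for the minimal realisation yields $\tilde P=T^\top P T$, which is feasible for the extended one; this uses $T\tilde A=AT$, $T\tilde B=B$ and $CT=\tilde C$, and positivity of $P$ is preserved. For the other direction, feasibility with $\tilde P$ implies the dissipation inequality $\tilde V(\xi_{k+1})-\tilde V(\xi_k)\le \gamma^2|u_k|^2-|y_k|^2$ along trajectories starting at zero (with $\xi=0$ consistent with $x_0=0$ and zero past). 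Summing gives the gain bound for $\Ttau$ (using $\tilde P\succeq0$ at the final time) or for $\Tinf$ (where $\xi_k\to0$ for finite-energy $u,y$, so no sign condition is needed). In both cases the gains are bounded by $\gamma$, so the optimal values agree. The same argument, with reversed inequalities, handles $\zeta$.

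\textbf{From model to data.} Each column of \eqref{eq:LMI_max_extended} is a stacking $(\xi_k,u_k)$. Writing $\Xi_+=\tilde A\Xi+\tilde B U_\Xi$ and $Y_\Xi=\tilde C\Xi+\tilde D U_\Xi$, the data \gls{lmi} becomes $W^\top \mathcal{M}(P,\gamma) W\preceq0$, where $\mathcal M$ is the model \gls{lmi} matrix and $W=\begin{pmatrix}\Xi\\ U_\Xi\end{pmatrix}$. Hence $\mathcal M\preceq0$ on $\operatorname{im}W$ is equivalent to the data condition. The main obstacle is therefore to show that $\operatorname{im}W$ equals the reachable set of pairs $(\xi,u)$. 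It is not all of $\Real^{2ml+m}$, since $\xi$ is constrained to $\operatorname{im}$ of a structured map of $(x_{k-l},u_{k-l:k-1})$. By the fundamental lemma (Willems et al.), persistency of excitation of order $n+l+1$ gives
\[
\operatorname{rank}\begin{pmatrix}x_{k-l}\text{-data}\\ U_{k-l:k}\text{-data}\end{pmatrix}=n+m(l+1),
\]
so $\operatorname{im}W$ is exactly the set of all consistent pairs. Restricting the model \gls{lmi} to this subspace is harmless, because the dissipation argument above only ever evaluates the inequality at consistent $(\xi,u)$. The converse direction also holds, since $T^\top PT$ gives $\mathcal M\preceq0$ on consistent pairs. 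Combining these equivalences yields all four claims, including the $P\succeq0$ versus unconstrained variants for $\Ttau$ versus $\Tinf$.
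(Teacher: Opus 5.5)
Your argument is correct and is essentially the paper's route. The paper's proof just cites \cite[Theorems 1 and 5]{Koch22} for the extended-state realisation and the persistency-of-excitation step (plus \cite[Theorem 2.8]{Scherer2000} for indefinite $P$) and then applies Theorem~\ref{thm:LMIs}. You instead prove these ingredients directly, via the fundamental lemma and a sandwich on the optimal values; your observation that $\xi_k\to0$ for finite-energy $u,y$ removes the need for any KYP-type argument on the non-minimal extended realisation in the $\Tinf$ case.

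One slip: with your $T$ built from $\mathcal{O}_l^{\dagger}$, the identity $T\tilde A=AT$ holds only on consistent windows. For example, with $n=m=1$, $l=2$, $A\neq0$ and $u\equiv0$, it fails whenever $y_{k-1}\neq Ay_{k-2}$. So $T^\top PT$ need not satisfy the extended \gls{lmi} on all of $\Real^{2ml+m}$, and the full-space claim in your ``Transferring Theorem~1'' paragraph is false as stated. This is harmless, because your final chain only uses $T^\top PT$ on consistent pairs, where $T(\tilde A\xi+\tilde Bu)=AT\xi+Bu$ does hold.
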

\begin{proof}
    From \cite[Theorems 1 and 5]{Koch22} it follows that the \glspl{lmi} in \eqref{eq:LMI_max_extended} and \eqref{eq:LMI_min_extended} are equivalent to \eqref{eq:LMI_maxgain} and \eqref{eq:LMI_mingain}, respectively, when $l$ is greater than or equal to the lag of the minimal state-space realisation associated with the operators and the input trajectory $(u_k)_{k=0}^{N-1}$ is persistently exciting of order $n + l + 1$. Then it follows from Theorem~\ref{thm:LMIs} that $\lim_{\tau \rightarrow \infty} \maxgain{\Ttau} =~\inf\gamma$ and $\lim_{\tau \rightarrow \infty} \mingain{\Ttau} = \sup \zeta$. 
    
    Similarly, without the constraint $P \succeq 0$, the result follows from combining the results in \cite[Theorems 1 and 5]{Koch22} with the continuous-time counterpart in~\cite[Theorem 2.8]{Scherer2000} and its connection to discrete-time following~\cite[Chapter 4.6]{Scherer2000}. Note that for \cite[Theorem 5]{Koch22} the result can be proved analogously when $P \not\succeq 0$.    
\end{proof}

\begin{remark}
    Note that the maximum and minimum gain of $\Tinf - \alpha \id$ and $\lim_{\tau \rightarrow \infty} \Ttau - \alpha \id$ can be calculated by replacing $\y_\k$ with $\y_\k - \alpha\u_\k$ for all $k$ in the matrices $\Xi, \Xi_{+}$ and $Y_{\Xi}$ in~\eqref{eq:sequences}. The same holds for Theorem \ref{thm:Robust_LMIs} below.
\end{remark} 

\begin{remark}
    An upper bound on the lag of a state-space system that can be used as $l$ in the results is, for example, the number of states in the state-space realisation. 
    Note however that larger $l$ yields larger \glspl{lmi}.
\end{remark} 

\section{Robust SRGs from noisy data trajectories}
\label{sec:noisetrajectories}
We will now consider the case when the data trajectory is corrupted by noise and use Corollary~\ref{cor:circle_thm} to compute a robust version of the \gls{srg} of $\Tinf$ and $\lim_{\tau \rightarrow \infty} \Ttau$. The robust version is guaranteed to cover the actual \gls{srg} given that the noise comes from a known set, as described below.

From \cite[Lemma 3]{Koch22}, it follows that the system in~\eqref{eq:system} can also be written on difference operator form
\begin{equation}
\begin{aligned}
        y_k = & - A_l y_{k-1} - \cdots - A_2 y_{k-l+1} - A_1y_{k-l} \\ &+ Du_k + B_l u_{k-1} + \cdots + B_2 u_{k-l+1} + B_1u_{k-l},
\end{aligned}
    \label{eq:difference_operator}
\end{equation} 
for some $A_i \in \Real^{m \times m}$ and $B_i \in \Real^{m \times m}$ with $i = 1, \dots l$ and $l$ greater than or equal to the lag of~\eqref{eq:system}. As this model yields the same input-output trajectories as the model in \eqref{eq:system}, it can likewise be associated with the operators $\Ttau$ and $\Tinf$ as in Definition~\ref{def:Ttau} and~\ref{def:Tinf}. 
We model the effect of noise on the system with process noise $v_k \in \Real^{m_v}$ following~\cite{Koch22} as below
\begin{equation}
\begin{aligned}
    y_k = &- A_l y_{k-1} - \cdots - A_2 y_{k-l+1} - A_1y_{k-l} \\& + Du_k + B_l u_{k-1} + \cdots + B_1u_{k-l} + B_vv_k,
\end{aligned}
    \label{eq:difference_operator_noise}
\end{equation}
 where $B_v \in \Real^{m \times
 m_v}$. Prior knowledge about how noise affects the system can be modelled with $B_v$. If no such information is available, we can simply pick $B_v = I$. 
 
The noisy input-output behaviour in \eqref{eq:difference_operator_noise} can be represented by a state-space setup with state $\xi$ as in \eqref{eq:xi}, on the following form
\begin{equation}
\setlength\arraycolsep{3.5pt} 
\begin{aligned}
\label{eq:extended_system_matrix}
\begin{pmatrix} u_{k-l+1} \\ \vdots \\ u_{k-1} \\ u_{k} \\ y_{k-l+1} \\ \vdots \\ y_{k-1} \\ y_{k} \end{pmatrix} = \begin{pmatrix} 
0 & I & \dots & 0 & 0 & 0 & \dots & 0  \\
\vdots & \ddots & \ddots&\ddots& \vdots & & \ddots&\vdots \\
0 & 0 & \dots & I & 0& 0&  \dots & 0 \\
0 & 0 & \dots & 0 & 0 & 0&  \dots & 0 \\
0 & 0 & \dots & 0 & 0& I& \dots &0 \\
\vdots & \ddots & \ddots&\vdots&\vdots & & \ddots&\vdots \\
0 & 0 & \dots & 0 & 0& 0&  \dots & I \\
B_1 & B_2 & \dots & B_l & -A_1 & -A_2 & \dots &  -A_l 
\end{pmatrix}  
\begin{pmatrix} u_{k-l} \\ u_{k-l+1} \\ \vdots \\ u_{k-1} \\ y_{k-l} \\ y_{k-l+1} \\ \vdots \\ y_{k-1}  \end{pmatrix} + 
\begin{pmatrix} 
0 \\ \vdots \\ 0 \\ I \\ 0 \\ \vdots \\ 0 \\ D  \end{pmatrix} 
u_{k} + \begin{pmatrix} 0 \\ \vdots \\ 0 \\ 0 \\ 0 \\ \vdots \\ 0 \\  B_v \end{pmatrix} v_{k} 
\end{aligned}
\end{equation}

As we assume that the system is unknown except for an input-output trajectory $(\u_\k,\y_\k)_{k = 0}^{N-1}$ only the first rows in this state-space model are known, while the last row must be estimated from the data trajectories. We can therefore write the state-space model on the following form
\begin{equation}
\begin{aligned}
    \label{eq:system_uncertain}
    \xi_{k+1} &= \begin{pmatrix} \widetilde{A} \\ \widetilde{C}  \end{pmatrix} \xi_k + \begin{pmatrix} \widetilde{B} \\ \widetilde{D}  \end{pmatrix} \u_k + \begin{pmatrix} 0\\ B_v  \end{pmatrix} v_k \\
    y_k & = \widetilde{C} \xi_k + \widetilde{D}u_k + B_vv_k,
\end{aligned}
\end{equation}
where $\widetilde{A} \in \Real^{(2ml-m) \times 2ml}$, $\widetilde{B} \in \Real^{(2ml-m) \times m}$ and $B_v$ are known, while $\widetilde{C} \in \Real^{m \times 2ml}$, $\widetilde{D} \in \Real^{m \times m}$ and $v_k$ are unknown.
If the unknown matrices $\widetilde{C}$ and $\widetilde{D}$ belong to a known set, we can use robust control tools to compute bounds on the maximum and minimum gain for the operators $\Ttau$ and $\Tinf$.
We obtain such a set by assuming a combination of quadratic and linear bounds on the noise sequence.
\begin{assumption}
The matrix $(v_l \quad v_{l+1} \quad \cdots \quad v_{N-1})$
belongs to the set
\begin{equation}
    \mathcal{V} \! \coloneqq \left\{V \in \Real^{m \times(N-l)} \! : \begin{pmatrix}
       V^\top \\ I
    \end{pmatrix}^\top \begin{pmatrix}
        Q & S \\ S^\top & R
    \end{pmatrix}  \begin{pmatrix}
       V^\top \\ I
    \end{pmatrix} \succeq 0 \right\},
\end{equation}
where $Q \in \Real^{(N-l) \times (N-l)}$, $S \in \Real^{(N-l) \times m_v}$, and   $R \in \Real^{m_v \times m_v}$, with $Q$ and $R$ symmetric and in addition $Q \prec 0$.
\label{noise_assumption}
\end{assumption}

This assumption gives a set of possible $\widetilde{C}$ and $\widetilde{D}$ that are consistent with the data trajectory for some noise sequence $V \in \mathcal{V}$, including the $\widetilde{C}$ and $\widetilde{D}$ of the actual model.
Hence, we define the \textit{robust \gls{srg}} for the uncertain model in~\eqref{eq:system_uncertain} to be the set of \glspl{srg} of all models that are consistent with the data and bounds on the noise. 
As the actual model is included in the set, this will over-approximate the actual \gls{srg}.
Define 
    \begin{align}
    \setlength\arraycolsep{3.5pt}
     \begin{pmatrix}
        \overline{Q} & \overline{S} \\ \overline{S}^\top & \overline{R}
    \end{pmatrix} \! \coloneqq & \!
    \setlength\arraycolsep{1pt}
     \begin{pmatrix} \! \begin{pmatrix}
        *  \\ * 
    \end{pmatrix} \! & \! \begin{pmatrix}
        * \\ *
    \end{pmatrix} \! \\  *  \! &  \! * \end{pmatrix} 
    \!\!\begin{pmatrix}
        Q & SB_v^\top \\ S^\top B_v & B_v R B_v^\top
    \end{pmatrix} \!\! \begin{pmatrix} \! \begin{pmatrix}
        \Xi  \\ U_\Xi 
    \end{pmatrix} \! & \! \begin{pmatrix}
        0 \\ 0
    \end{pmatrix} \\  Y_\Xi \! & \! I \end{pmatrix}^{\!\top} \\
    \label{eq:inverse}
    \text{and} \quad  & \begin{pmatrix}
        \widetilde{Q} & \widetilde{S} \\ \widetilde{S}^\top & \widetilde{R}
    \end{pmatrix} \coloneqq      \begin{pmatrix}
        \overline{Q} & \overline{S} \\ \overline{S}^\top & \overline{R}
    \end{pmatrix}^{-1}
\end{align}
assuming that the inverse exists.
If $Q,S,R$, and $B_v$ are chosen such that the inner matrix is invertible, invertibility of \eqref{eq:inverse} follows from the outer matrix having full column rank.
Since the set of noise realisations that do not fulfil this has measure zero under sufficiently persistently exciting inputs, invertibility holds with probability one for continuous noise distributions.
We now show how to compute the robust \gls{srg}.
\begin{theorem}
Let Assumption \ref{noise_assumption} hold and assume the matrix in \eqref{eq:inverse} is invertible. Consider the operators $\Ttau$ and $\Tinf$ as defined in Definition \ref{def:Ttau} and \ref{def:Tinf}, and let $l$ be greater than or equal to the lag of the associated system in~\eqref{eq:system}. If there exists a $P = P^\top \succeq 0$ and $\tau \geq 0$ such that \eqref{eq:noise_LMI} holds for $\Lambda = -\gamma^2I$ and $\Psi = I$, then $$\lim_{\tau \rightarrow \infty} \maxgain{\Ttau} \leq \gamma.$$ Likewise, if there exists a $P = P^\top \succeq 0$ and $\tau \geq 0$ such that \eqref{eq:noise_LMI} holds for $\Lambda = \zeta^2I$ and $\Psi = -I$, then $$\lim_{\tau \rightarrow \infty} \mingain{\Ttau} \geq \zeta.$$ If we drop the requirement that $P \succeq 0$, we instead get that $\maxgain{\Tinf}~\leq~\gamma$ and $\mingain{\Ttau}~\geq~\zeta$. 
\label{thm:Robust_LMIs}
\end{theorem}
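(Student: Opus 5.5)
The plan is to reduce the statement to Theorem~\ref{thm:LMIs}, applied to the state-space model \eqref{eq:system_uncertain} with state $\xi_k$, combined with a full-block S-procedure that removes the unknown last row $(\widetilde{C}\;\,\widetilde{D})$. The LMI \eqref{eq:noise_LMI} is not shown above. I will assume it has the form used in \cite{Koch22}:
\begin{equation}
\begin{pmatrix} \star \end{pmatrix}^\top
\begin{pmatrix} -P + \Lambda\text{-terms} & 0 \\ 0 & \text{blocks built from } P,\Psi \end{pmatrix}(\cdots)
\; - \; \tau\,(\star)^\top \begin{pmatrix} \widetilde{Q} & \widetilde{S} \\ \widetilde{S}^\top & \widetilde{R} \end{pmatrix}(\cdots) \preceq 0,
\end{equation}
where $\Lambda$ multiplies $U^\top U$ and $\Psi$ multiplies the output term.

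\textbf{Step 1: the data-consistent set.} Let $\Sigma$ be the set of matrices $(\widetilde{C}\;\,\widetilde{D})$ that are consistent with the data for some $V\in\mathcal{V}$. Substituting $B_v V = Y_\Xi - (\widetilde{C}\;\,\widetilde{D})\binom{\Xi}{U_\Xi}$ into Assumption~\ref{noise_assumption} gives a quadratic matrix inequality in $(\widetilde{C}\;\,\widetilde{D})^\top$ with weight $\begin{pmatrix}\overline{Q}&\overline{S}\\ \overline{S}^\top&\overline{R}\end{pmatrix}$. Because $Q\prec 0$ and the matrix in \eqref{eq:inverse} is invertible, the dualization lemma rewrites this as a condition on $(\widetilde{C}\;\,\widetilde{D})$ itself with weight $\begin{pmatrix}\widetilde{Q}&\widetilde{S}\\ \widetilde{S}^\top&\widetilde{R}\end{pmatrix}$. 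The true last row lies in $\Sigma$.

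\textbf{Step 2: S-procedure to a dissipation inequality.} Fix any $(\widetilde{C}\;\,\widetilde{D})\in\Sigma$. Since $\tau\ge 0$, multiplying \eqref{eq:noise_LMI} on both sides by the appropriate matrix containing $(\widetilde{C}\;\,\widetilde{D})$ makes the $\tau$-term nonnegative, so it can be dropped. What remains is the LMI \eqref{eq:LMI_maxgain} (respectively \eqref{eq:LMI_mingain}) for the realisation with matrices $\binom{\widetilde A}{\widetilde C}$, $\binom{\widetilde B}{\widetilde D}$, $\widetilde C$, $\widetilde D$, with $\Lambda=-\gamma^2 I,\Psi=I$ (respectively $\Lambda=\zeta^2I,\Psi=-I$).

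\textbf{Step 3: conclude via Theorem~\ref{thm:LMIs}.} In the max-gain case, summing the dissipation inequality along trajectories with $\xi_0=0$ gives $\|y\|^2\le\gamma^2\|u\|^2$ for every truncated trajectory, using $P\succeq0$ to discard $\xi_\tau^\top P\xi_\tau$. This yields $\lim_\tau\maxgain{\Ttau}\le\gamma$. The bound comes straight from the summation, so minimality of the non-minimal realisation in $\xi$ is never used. The min-gain case is the same computation with the reversed sign, giving $\|y\|^2\ge\zeta^2\|u\|^2$.

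Without $P\succeq0$, I restrict to $u\in\dom{\Tinf}$. There $y\in\elltwoinf$ and hence $\xi_k\to0$, so the terminal term $\xi_k^\top P\xi_k$ vanishes in the limit whatever the sign of $P$. This gives the bounds for $\Tinf$; the min-gain statement is presumably meant for $\Tinf$.

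\textbf{Main obstacle.} The hardest step is Step 1: verifying the dualization correctly, including the sign conditions on the inertia of $\begin{pmatrix}\overline{Q}&\overline{S}\\ \overline{S}^\top&\overline{R}\end{pmatrix}$ that $Q\prec0$ must supply. I would take this from \cite{Koch22} rather than rederive it.
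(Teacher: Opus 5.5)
Your proof is correct. Steps 1--2 coincide with the paper's: you describe the data-consistent set $\Sigma$ by a QMI with weight $\overline{Q},\overline{S},\overline{R}$ and dualize it, using $\overline{Q}\prec 0$, to the weight $\widetilde{Q},\widetilde{S},\widetilde{R}$. Sandwiching \eqref{eq:noise_LMI} with $\bigl(\begin{smallmatrix} I & 0\\ 0 & I\\ \widetilde{C} & \widetilde{D}\end{smallmatrix}\bigr)$ then lets you drop the $\tau$-term. This leaves a dissipation inequality with storage $\xi^\top P\xi$ for every consistent system, in particular the true one.

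The routes part in Step 3, which, despite its heading, never uses Theorem~\ref{thm:LMIs}. The paper invokes \cite[Theorems~1 and~2]{Koch22} to transfer dissipativity of the non-minimal $\xi$-realisation into feasibility of \eqref{eq:LMI_maxgain} (resp.\ \eqref{eq:LMI_mingain}) for the minimal $(A,B,C,D)$. It then reads the bound off Theorem~\ref{thm:LMIs}, i.e.\ the Bounded Real/KYP lemma, and treats the indefinite-$P$ case as analogous. You instead telescope directly from $\xi_0=0$. For $\Ttau$ you discard $\xi_{\tau+1}^\top P\xi_{\tau+1}$ via $P\succeq 0$. For $\Tinf$ you use that $\xi_k\to 0$ for $u\in\dom{\Tinf}$, because $\xi_k$ stacks past samples of $u,y\in\elltwoinf$. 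Your route is more elementary and self-contained: it needs no minimality, controllability or KYP, and it gives a genuine argument for indefinite $P$. The paper's route buys the interpretation that the noisy-data certificate implies the model-based certificate of Theorem~\ref{thm:LMIs}.

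Two small remarks. First, $\xi_0=0$ deserves a line: padding $u_k=y_k=0$ for $k<0$ gives a valid trajectory from rest, so \eqref{eq:difference_operator} holds from $k=0$. Second, you are right that the last claim must read $\mingain{\Tinf}\geq\zeta$. With indefinite $P$ nothing controls the terminal term at finite horizon. Indeed, for $D=0$ the proof of Theorem~\ref{thm:LMIs} gives $\mingain{\Ttau}=0$, while $\mingain{\Tinf}$ can be positive (e.g.\ a unit delay).
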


\begin{equation}
\setlength\arraycolsep{3.5pt} 
\begin{aligned} \label{eq:noise_LMI}
&\left( \!
\begin{array}{ccc}
I & 0 & 0 \\
\begin{pmatrix}\widetilde{A} \\ 0\end{pmatrix} & \begin{pmatrix}\widetilde{B} \\ 0\end{pmatrix} & \begin{pmatrix}0 \\ I \end{pmatrix} \\ 
0 & I & 0 \\ 
0 & 0 & I  
\end{array}
\! \right)^\top \!\!
\left(
\begin{array}{cccc}
-P&0&0&0\\
0&P&0&0\\
0&0& \Lambda & 0 \\
0&0& 0 & \Psi 
\end{array}\right)
\left( \!
\begin{array}{ccc}
I & 0 & 0 \\
\begin{pmatrix}\widetilde{A} \\ 0\end{pmatrix} & \begin{pmatrix}\widetilde{B} \\ 0\end{pmatrix} & \begin{pmatrix}0 \\ I \end{pmatrix} \\ 
0 & I & 0 \\ 
0 & 0 & I \\ 
\end{array}
\! \right)  -  \tau \begin{pmatrix}
        \widetilde{Q} & \widetilde{S} \\ \widetilde{S}^\top & \widetilde{R}
    \end{pmatrix} \preceq 0 \\
& \text{Maximum gain: } \Lambda = -\gamma^2 I \quad \Psi = I \quad \text{Minimum gain: } \Lambda = \zeta^2 I \quad \Psi = -I
\end{aligned}
\end{equation}

\begin{proof}
    We first show that the set of all possible matrix pairs $(\widetilde{C},\widetilde{D})$ according to the noise model
    \begin{equation}
        \Sigma = \{ (\widetilde{C}, \widetilde{D}) : Y_\Xi = \widetilde{C}\Xi + \widetilde{D}U_\Xi + B_v  V, V\in \mathcal{V} \},
    \end{equation}
    is equal to
    \begin{equation}
    \left\{ (\widetilde{C}, \widetilde{D}) :  \begin{pmatrix}
       I & 0 \\ 0 & I \\ \widetilde{C} & \widetilde{D}
    \end{pmatrix}^\top \!\! \begin{pmatrix}
        \widetilde{Q} & \widetilde{S} \\ \widetilde{S}^\top & \widetilde{R}
    \end{pmatrix} \!  \begin{pmatrix}
       I & 0 \\ 0 & I \\ \widetilde{C} & \widetilde{D}
    \end{pmatrix} \preceq 0 \right\}.
    \label{eq:noise_set}
    \end{equation}
    It follows along the lines of the proof of \cite[Remark 2 and Lemma 4]{vWaarde22} that $\Sigma$ is equal to
    \begin{align}
        \left\{ (\widetilde{C}, \widetilde{D}) : 
        \begin{pmatrix} -\widetilde{C}^\top \\ -\widetilde{D}^\top \\ I \end{pmatrix}^\top \begin{pmatrix} \overline{Q} & \overline{S} \\ \overline{S}^\top & \overline{R} \end{pmatrix}
        \begin{pmatrix} -\widetilde{C}^\top \\ -\widetilde{D}^\top \\ I \end{pmatrix} \succeq 0 \right\}. 
    \end{align}
    As the invertibility of the matrix in \eqref{eq:inverse} ensures that $\begin{pmatrix}
        \Xi^\top & U_{\Xi}^\top 
     \end{pmatrix}$ has full column rank, and we assume $Q \prec 0$, we know that $\overline{Q} \prec 0$. The dualisation lemma \cite[Lemma 4.9]{Scherer2000} then gives that $\Sigma$ is equal to~\eqref{eq:noise_set}.  

    Now we multiply \eqref{eq:noise_LMI} with 
    \begin{equation}
        \begin{pmatrix}
       I & 0 \\ 0 & I \\ \widetilde{C} & \widetilde{D}
    \end{pmatrix}
    \end{equation}
    from the right and its transpose from the left.
    It then follows from the fact that~\eqref{eq:noise_set} is equal to $\Sigma$ that
    the existence of a $P = P^\top \succeq 0$ and $\tau \geq 0$ such that \eqref{eq:noise_LMI} holds implies that 
    there exists a $P = P^\top \succeq 0$ such that 
    \begin{equation}
    \begin{aligned}
    \setlength\arraycolsep{3pt} 
    \left(
    \begin{array}{cccc}
    \star & \star \\ 
    \star & \star \\ 
    \star & \star \\ 
    \star & \star
    \end{array}
    \right)^\top
    \left(
    \begin{array}{cccc}
    -P&0&0&0\\
    0&P&0&0\\
    0&0&\Lambda&0\\
    0&0&0&\Psi
    \end{array}\right)
     \left( \!\!
    \begin{array}{cccc}
    I & 0 \\
   \begin{pmatrix}
       \widetilde{A} \\ \widetilde{C}
   \end{pmatrix} & \begin{pmatrix}
       \widetilde{B} \\ \widetilde{D}
   \end{pmatrix} \\ 
    0 & I \\ 
    \widetilde{C} & \widetilde{D}
    \end{array} \!\!
    \right) \preceq 0
    \end{aligned}
    \end{equation}
    for all possible $(\widetilde{C}, \widetilde{D}) \in \Sigma$. 
    So if there exists a $P = P^\top \succeq 0$ and $\tau \geq 0$ such that \eqref{eq:noise_LMI} holds for $\Lambda = -\gamma^2I$ and $\Psi = I$ this implies that there exists a non-negative storage function $V(\xi) = \xi^\top P\xi$ such that
    $V(\xi_{k+1}) - V(\xi_k) \leq   \gamma^2\norm{u_k}^2 - \norm{y_k}^2$
    for the trajectories of all possible systems \eqref{eq:difference_operator} consistent with set $(\widetilde{C}, \widetilde{D}) \in \Sigma$. It then follows from \cite[Theorem~1 and~2]{Koch22} that feasibility of  \eqref{eq:noise_LMI} with $\Lambda = -\gamma^2I$ and $\Psi = I$ implies 
    feasibility of \eqref{eq:LMI_maxgain}, as the actual system is included in the set $\Sigma$. 
    
  The proof for the lower bound on the minimum gain follows analogously. As well as for the proofs where the constraint $P \succeq 0$ is removed, with the only difference that the storage function no longer has to be non-negative. 
\end{proof}

\section{Examples} \label{sec:examples}
In this section, we show \glspl{srg} of operators $\Tinf$ and $\Ttau$ associated with a discrete-time \gls{lti} system on state-space form, computed with the different methods presented in the paper. The examples illustrate that 
\begin{itemize}
    \item The \gls{srg} obtained with Theorems~\ref{thm:LMIs} and~\ref{thm:LMIs_extended} are the same.
    \item The robust \gls{srg} contains the \gls{srg} of the underlying system and is monotone increasing with noise level.
    \item Systems with the same \gls{srg} can have different robust \glspl{srg}.
\end{itemize}

To ensure that the robust \gls{srg} is the smallest possible set, we minimise $\gamma$ and maximise $\zeta$ for Theorem \ref{thm:Robust_LMIs}. When we generate noisy data trajectories for the computations, we sample the noise uniformly from a ball such that $\norm{v_k}_{2} \leq \Bar{v}$ for all $k = 0, \dots,N-1$. This means that the assumption on the noise from Assumption \ref{noise_assumption} has the parameters $Q = -I$, $S = 0$ and $R = \Bar{v}^2(N-l)I$, where we let  $l$ be the lag of the system in~\eqref{eq:system} associated with the operators. We also assume no prior knowledge of how the noise affects the system, hence $B_v = I$. The \glspl{lmi} are solved using MOSEK 11~\cite{mosek}. For all examples we use $N = 50$, and $1000$ equally spaced $\alpha$ points from $-20$ to $20$. Levels of $\Bar{v}$ are specified in the figures.
\subsection{Unstable MIMO}
In Figure~\ref{fig:unstable_MIMO}, we see an example of the \glspl{srg} for the two different operators $\Tinf$ and $\lim_{\tau \rightarrow \infty} \Ttau$ associated with an unstable \gls{mimo} system, which has state-space representation 
\begin{equation}
    \label{eq:ss_MIMO_ex}
    \begin{aligned}
        \x_{\k+1} &= \begin{pmatrix}
            0.5  & 0 & 0 &0 \\ 0 & 1.05 & 0 & 0 \\ 0 & 0 & -0.3 & 0 \\ 0& 0 & 0 & -0.9
        \end{pmatrix}\x_{\k} + \begin{pmatrix}
            -2 & 0 \\ 1 & 0 \\ 1 & -2 \\ -1 & 0
        \end{pmatrix}\u_{\k}\\
         \y_{\k} &= \begin{pmatrix}
             0.2 &-0.3 & 0.4 &0 \\ 0 &0.1 &-0.3& 0.5
         \end{pmatrix}\x_{\k}.
    \end{aligned}
\end{equation}
This shows that the \gls{srg} obtained from the state-space representation using Theorem~\ref{thm:LMIs} and the \gls{srg} obtained from data trajectories using Theorem~\ref{thm:LMIs_extended} are the same. 

We also see that the robust versions of the \gls{srg} obtained using Theorem~\ref{thm:Robust_LMIs} gives a slightly bigger area that includes the original \gls{srg}. Furthermore, the \gls{srg} for the associated operator $\lim_{\tau \rightarrow \infty} \Ttau$ includes $\{\infty\}$ as the system is unstable.

\begin{figure}
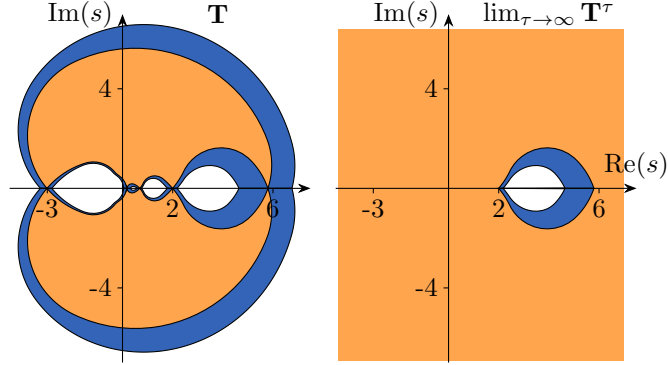

    \centering
      \begin{tikzpicture}[scale = 0.3, >={Stealth[scale=1]}] 
      \begin{scope}[every to/.style={hyperbolic plane}]
      \input{data/SRG/robust_MIMO_unstable.tex}
      \input{data/SRG/trajectory_MIMO_unstable.tex}
      \input{data/SRG/normal_MIMO_unstable.tex}
      \end{scope}

    \node[right] at (3,7) {$\Tinf$};

      \foreach \x in {-3,2,6}{
        \node[below] at (\x,-0.1) {\x};
        \draw (\x,0) -- (\x,-0.2);
      }

    \foreach \y in {-4, 4}{
        \node[left] at (-0.05,\y) {\y};
        \draw (0,\y) -- (-0.2,\y);
    }


      \draw[->] (-4.5,0) -- (7.5,0);
      \draw[->] (0,-7) -- (0,7) node[left] {$\Im(s)$};

      \begin{scope}[shift={(13.5,0)}]
         \fill[DutchOrange!70] (-4.4,-6.9) rectangle (7,6.4);   
      \input{data/SRG/MIMO_int.tex}

      \node[right] at (1,7) {$\lim_{\tau \rightarrow \infty} \Ttau$};

      \foreach \x in {-3,2,6}{
        \node[below] at (\x,-0.1) {\x};
        \draw (\x,0) -- (\x,-0.2);
      }

    \foreach \y in {-4, 4}{
        \node[left] at (-0.05,\y) {\y};
        \draw (0,\y) -- (-0.2,\y);
    }


      \draw[->] (-4.5,0) -- (7.5,0) node[above] {$\Re(s)$};
      \draw[->] (0,-7) -- (0,7) node[left] {$\Im(s)$};
      \node[left] at (6,5) {$\cup \{\infty\}$};

     \node[right] at (2.5, -4.76) {$\Bar{v}: 0.01$};
        \draw[fill=DutchBlue!80] (2, -4.6) rectangle (2.66, -4.98);
      \end{scope}
    \end{tikzpicture}
    \caption{The \glspl{srg} for the operators $\Tinf$, to the left, and  $\lim_{\tau \rightarrow \infty} \Ttau$, to the right, of the system in \eqref{eq:ss_MIMO_ex}. The orange area is the \gls{srg} obtained from state-space representation or data trajectories, while the blue area shows the robust extension obtained from noisy data trajectories.} 
    \label{fig:unstable_MIMO}
\end{figure}

\subsection{Low- and High-pass filters}
The \glspl{srg} and their robust versions with different noise levels for a low-pass filter and a high-pass filter are shown in Figure~\ref{fig:low_high}. The operators we consider are the corresponding $\Tinf$. 
The low-pass filter has state-space representation
\begin{equation}
    \begin{aligned}
        \x_{\k+1} &= \begin{pmatrix}
            0.94  & -0.33 \\ 1 & 0
        \end{pmatrix}\x_{\k} + \begin{pmatrix}
            1 \\ 0
        \end{pmatrix}\u_{\k}\\
         \y_{\k} &= \begin{pmatrix}
             0.29 & 0.07
         \end{pmatrix}\x_{\k} + 0.10 \, \u_{\k},
         \label{eq:lowpass}
    \end{aligned}
\end{equation}
and the high-pass filter has state-space representation
\begin{equation}
    \begin{aligned}
        \x_{\k+1} &= \begin{pmatrix}
            0.94  & -0.33 \\ 1 & 0
        \end{pmatrix}\x_{\k} + \begin{pmatrix}
            1 \\ 0
        \end{pmatrix}\u_{\k}\\
         \y_{\k} &= \begin{pmatrix}
             -0.60 & 0.38
         \end{pmatrix}\x_{\k} + 0.57 \, \u_{\k}.
         \label{eq:highpass}
    \end{aligned}
\end{equation} 
Both these systems have the same \gls{srg}, but the robust \glspl{srg} obtained from noisy data trajectories are different. The Nyquist diagrams of these systems have exactly overlapping curves, yielding the same \glspl{srg}. However, the frequency mapping for the systems are not equivalent, which means that the same point on the Nyquist diagram corresponds to different frequencies for both systems. For both systems the noise model in \eqref{eq:difference_operator_noise}, used for the computation of the robust \gls{srg}, gives the low-pass filter $y_k = 0.94y_{k-1} - 0.33y_{k-2} + v_k$ from noise to output. This means that we expect the noise to affect low frequencies most, explaining the differences in the respective robust \glspl{srg} in Figure~\ref{fig:low_high}.
Figure~\ref{fig:low_high} also shows that the robust \gls{srg} grows monotonically with noise level, forming nested subsets of~$\Compex$.

\begin{figure}
    \centering
        \begin{tikzpicture}[every to/.style={hyperbolic plane}, scale=1.82, >={Stealth[scale=1]}]     
      \input{data/SRG/robust_low_inf_0.02.tex}
      \input{data/SRG/robust_low_inf_0.01.tex}
      \input{data/SRG/robust_low_inf_0.005.tex}
      \input{data/SRG/normal_lowpass_inf.tex}
     \node[right] at (0.1,1.15) {\textbf{Low-pass filter}};

      \foreach \x in {0.5}{
        \node[below] at (\x,-0.01) {\x};
        \draw (\x,0) -- (\x,-0.05);
      }

    \foreach \y in {-1,1}{
        \node[left] at (-0.05,\y) {\y};
        \draw (0,\y) -- (-0.05,\y);
    }


      \draw[->] (-0.35,0) -- (1.3,0); 
      \draw[->] (0,-1.05) -- (0,1.15) node[left] {$\Im(s)$};

        \begin{scope}[shift={(2.6,0)}]
        \input{data/SRG/robust_high_inf_0.02.tex}
      \input{data/SRG/robust_high_inf_0.01.tex}
      \input{data/SRG/robust_high_inf_0.005.tex}
        \input{data/SRG/normal_highpass_inf.tex}

          \node[right] at (0.1,1.15) {\textbf{High-pass filter}};

      \foreach \x in { 0.5}{
        \node[below] at (\x,-0.01) {\x};
        \draw (\x,0) -- (\x,-0.05);
      }

    \foreach \y in {-1,1}{
        \node[left] at (-0.05,\y) {\y};
        \draw (0,\y) -- (-0.05,\y);
    }

      \draw[->] (-0.6,0) -- (1.3,0) node[above] {$\Re(s)$};
      \draw[->] (0,-1.05) -- (0,1.15) node[left] {$\Im(s)$};
     \begin{scope}[shift={(0,-1.5)}]
    \node[right] at (1.1,1.1) {$\Bar{v}:$};
      \draw[fill=DutchBlue!90] (0.95, 0.9) rectangle (1.05, 0.95);
        \node[right] at (1.05, 0.925) {$0.005$};
        \draw[fill=DutchBlue!60] (0.95, 0.73) rectangle (1.05, 0.78);
        \node[right] at (1.05, 0.755) {$0.01$};
        \draw[fill=DutchBlue!30] (0.95, 0.56) rectangle (1.05, 0.61);
        \node[right] at (1.05, 0.585) {$0.02$};
    \end{scope}
    \end{scope}
    \end{tikzpicture}
    \caption{The orange area shows the \gls{srg} of the operator $\Tinf$ associated with the filters in \eqref{eq:lowpass} and \eqref{eq:highpass}, obtained from state-space representation or data trajectories. The extended blue areas show the robust version obtained from noisy data trajectories with different noise levels $\Bar{v}$.}
    \label{fig:low_high}
\end{figure}

\section{Conclusions}
\label{sec:conclusions}
We have demonstrated how to compute the \gls{srg} of discrete-time \gls{lti} systems on state-space form in three different ways. If the state-space representation is known, the \gls{srg} can be obtained from \glspl{lmi} based on the state-space matrices. If the system is unknown, we showed how the \gls{srg} can be computed exclusively from data. We distinguish between noise-free and noisy data trajectories, where the latter gives a robust version of the \gls{srg} that contains the \gls{srg} of the actual system.
Throughout the paper, we specify the results for two types of operators defined over $\ell_{2}$ or truncations of~$\ell_{2}$.

\bibliographystyle{plain}
\bibliography{main}

\end{document}